\newtheorem{theorem}{Theorem}
\newtheorem{lemma}{Lemma}
\newtheorem{definition}{Definition}
\newtheorem{conjecture}{Conjecture}
\newtheorem{proposition}{Proposition}
\newtheorem{corollary}{Corollary}
\newtheorem{remark}{Remark}
\begin{document}

\title{On the Bounds of certain Maximal Linear Codes in a Projective Space}
\author{Srikanth~Pai,
        B.~Sundar~Rajan,~\IEEEmembership{Fellow,~IEEE}% <-this % stops a space
\thanks{The authors are with the Department
of Electrical and Communication Engineering, Indian Institute of Science, Bangalore-560012, India. E-mail: srikanthbpai@gmail.com, bsrajan@ece.iisc.ernet.in.}}
%\author{\IEEEauthorblockN{Srikanth Pai and B. Sundar Rajan}
%\author{Srikanth B. Pai and B. Sundar Rajan\\
%Dept. of ECE, Indian Institute of Science,\\
%Bangalore 560012, India\\
%Email:\{spai,bsrajan\}@ece.iisc.ernet.in}
%\IEEEauthorblockA{Dept. of ECE, Indian Institute of Science,\\
%Bangalore 560012, India\\
%Email:\{spai,bsrajan\}@ece.iisc.ernet.in}}

\maketitle

\IEEEpeerreviewmaketitle

\begin{abstract}
The set of all subspaces of $\mathbb{F}_q^n$ is denoted by $\mathbb{P}_q(n)$. The subspace distance $d_S(X,Y) = \dim(X)+ \dim(Y) - 2\dim(X \cap Y)$ defined on $\mathbb{P}_q(n)$ turns it into a natural coding space for error correction in random network coding. A subset of $\mathbb{P}_q(n)$ is called a code and the subspaces that belong to the code are called codewords. Motivated by classical coding theory, a linear coding structure can be imposed on a subset of $\mathbb{P}_q(n)$. Braun, Etzion and Vardy conjectured that the largest cardinality of a linear code, that contains $\mathbb{F}_q^n$, is $2^n$. In this paper, we prove this conjecture and characterize the maximal linear codes that contain $\mathbb{F}_q^n$.
\end{abstract}

\section{Introduction}

%ADD the definition of subspace distance
%ADD the definition of projective space
%\cite{BEV} for most part
% state tht it follows from context what 0 stands for.

\IEEEPARstart{T}{he} class of all subspaces of $\mathbb{F}_q^n$ is denoted by $\mathbb{P}_q(n)$ and will be called a {\it projective space}. The projective space $\mathbb{P}_q(n)$ is endowed with a subspace distance. Given two subspaces $X$ and $Y$, $$\text{d}_S(X,Y) := \dim(X)+ \dim(Y) - 2\dim(X \cap Y).$$ The metric arises as a distance function in the field of error correction in random network coding. In this context, a subset of the projective space is called a { \it code} and the elements that belong to a code are called {\it codewords}. Koetter and Kschischang showed that one can view transmission of messages over networks as transmission of elements of a code in a projective space\cite{KoeKschi}. In their formulation, the amount of `error' between a transmitted subspace and a received subspace is measured by their subspace distance. They showed that one can correct a fixed amount of errors in the network given that all the subspaces in the code are pairwise sufficiently far apart. This approach mimicked the theory of classical error correcting codes. And therefore the theory of error correction in random network coding draws some motivations from the field of classical error correcting codes.

Classical error correcting codes are constructed over $\mathbb{F}_2^n$. Here a `code' is an arbitrary subset of $\mathbb{F}_2^n$ and the `codewords' are $n$ length vectors whose co-ordinates are elements of $\mathbb{F}_2$. The distance, called the {\it Hamming distance}, between two codewords is measured by the number of differing co-ordinates between the two vectors. The space of $\mathbb{F}_2^n$ which is equipped with a Hamming distance is called the {\it Hamming space}. Most of the theory on classical error correcting codes focus on a special class of codes called {\it linear codes}. Linear codes are just subspaces of $\mathbb{F}_2^n$. A subspace, by definition, is an abelian group under vector addition. And it turns out that the Hamming distance is translation invariant with respect to this addition. The geometry of the Hamming distance combines with group structure to give a rich theory of classical binary error correcting codes \cite{MacSlo}. 

However there are a few differences between error correction coding over projective spaces and Hamming spaces. If the volume of a set is defined as the number of elements in the set whenever the set is finite, one can study how the volume of a ball changes as the centre is translated. In Euclidean spaces and Hamming spaces, the volume of a ball is independent of its centre. But the volume of a ball is not independent of its centre in a projective space. A technical way of restating this observation is that the geometry induced by the subspace distance $d_S$ on $\mathbb{P}_q(n)$ is not distance-regular (unlike the Hamming space) and one cannot mimic the tricks of sphere-covering/packing bounds for the Hamming space \cite{KoeKschi}. And therefore the usual geometric intuitions fail in projective spaces. However, it turns out that restricting the projective spaces to a class of subspaces with a particular dimension gets this distance regular property. Therefore it becomes interesting to consider the set of all subspaces of particular dimension. We shall use the notation $\mathbb{P}_q(n,k)$ to denote the set of all elements of $\mathbb{P}_q(n)$ with dimension $k$.The authors of \cite{KoeKschi} lay the groundwork, derives bounds and construct a few codes on $\mathbb{P}_q(n,k)$. These codes are called {\it constant dimension codes} and recently there was a lot of activity in this area \cite{TraManRos}, \cite{EtzSilb}, \cite{GadYan}. 

On the other hand, our focus is on non-constant dimension codes. Even though the distance-regular property fails, one can recover a form of Gilbert-Varshamov bound for non-constant dimension codes \cite{EtzVar} and derive a form of Singleton bound for the projective space \cite{SriBSR}. The concepts of `linearity' and `complements' in projective spaces are examined in \cite{BEV}. The definitions are motivated from their binary error correction coding counterpart. The authors discuss different possibilities of complements and linearity and they state a few conjectures. This paper is mainly focused on resolving one conjecture concerning the linearity. 

A code in $\mathbb{P}_q(n)$ which imitates the group structure of a binary vector space and has the translation invariant property on the subspace distance is called a {\it linear code in the projective space} in \cite{BEV}. Henceforth, we will just call it a linear code. One can construct such a linear code with $2^n$ codewords by identifying a basis as the generators of the group. A linear code constructed in this fashion is called {\it a code derived from a fixed basis}. It follows from \cite{BEV} that $\mathbb{P}_q(n)$ cannot be a linear code. So a maximal linear code is somewhere between these two extremes. The main conjecture claims that the largest possible cardinality of a maximal linear code must be {\bf equal} to the cardinality of a code derived from a fixed basis. In other words, it states that the largest linear code in a projective space has a cardinality of $2^n$. As a first step towards proving this result, a special case of a linear code over $\mathbb{P}_q(n)$ that contains $\mathbb{F}_q^n$ is considered. It is conjectured that a linear code that contains $\mathbb{F}_q^n$ as a codeword can contain a maximum of $\binom{n}{k}$ $k$-dimensional codewords \cite{BEV}. In this paper, we prove this conjecture. We further show that maximal linear codes that contains $\mathbb{F}_q^n$ as a codeword must be derived from a fixed basis. It is important to note that if we drop the condition of `$\mathbb{F}_q^n$ is a codeword', one can construct linear codes that are not derived from a fixed basis (See Section $3$, Example $1$ in \cite{BEV}).

The organization of our paper is as follows: In Section \ref{Overview} we present the conjecture formally and state all the important definitions relating to the conjecture. The proof of the conjecture and a few additional results are presented in Section \ref{Proofs}. We conclude the paper with general remarks about the conjecture and a few open problems in Section \ref{Conclusion}.

{\it Notations}: $\mathbb{F}_q^n$ represents the $n$-dimensional vector space over the finite field $\mathbb{F}_q$. The class of all subspaces of $\mathbb{F}_q^n$ is denoted by $\mathbb{P}_q(n)$. $\mathbb{P}_q(n,k)$ represents the set of all elements of $\mathbb{P}_q(n)$ with dimension $k$. The $\text{span}(S)$ for a subset $S$ of $\mathbb{F}_q^n$ stands for the linear span of the elements of $S$. Occasionally, we identify the one-dimensional space $\text{span}(v) \in \mathbb{P}_q(n)$ with the vector $v \in \mathbb{F}_q^n$. The trivial space and the null element of the vector space are both represented by $0$ and they are meant to be distinguished from context. Given two sets $X$ and $Y$, $X \triangle Y:= (X \cup Y) \setminus (X \cap Y)$ denotes the symmetric difference of $X$ and $Y$. $A \oplus B$ represents the direct sum of two disjoint subspaces $A$ and $B$.

\section{Overview of the Conjecture}
\label{Overview}
Based on motivations to mimic the classical linear code structure on $\mathbb{F}_2^n$ endowed with the Hamming distance, a linear code on $\mathbb{P}_q(n)$ is defined in \cite{BEV} as follows: 

\begin{definition}
A code ${\cal C} \subseteq \mathbb{P}_q(n)$ is a linear code if 
\begin{enumerate}
\item there exists a binary operation $\boxplus$ on $\cal C$, such that $({\cal C}, \boxplus)$ is an Abelian group.
\item The additive identity of $({\cal C}, \boxplus)$ is the trivial space $0$.
\item For every $X \in {\cal C}$, we have $X \boxplus X = 0$.
\item $\text{d}_S(X,Y) = \text{d}_S(X \boxplus Z, Y \boxplus Z)$ for all $X,Y,Z \in {\cal C}$.
\end{enumerate}
\end{definition}

A couple of remarks on the definition of a linear code in $\mathbb{P}_q(n)$:
\begin{remark}
The first three conditions of the definition allow us to view a linear code ${\cal  C}$ as a subspace of a vector space of characteristic $2$. The third property states that every element is an {\it idempotent}.
\end{remark}

\begin{remark}
The last property in the definition is called the {\it translation invariant property}. In words, this property states that the subspace distance between $X$ and $Y$ does not change if both were `translated' by the same codeword $Z$.
\end{remark}

The following proposition is an immediate corollary of the definition of a linear code.
\begin{proposition}
\label{dimensionofsum}
If $X$ and $Y$ are two codewords of a linear code $({\cal C},\boxplus)$, then 
\[\dim(X \boxplus Y) = \text{d}_S(X,Y) .\]
\end{proposition}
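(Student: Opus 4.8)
The plan is to read off the identity directly from the axioms, using the translation invariant property (condition 4) with a well-chosen translating codeword together with the idempotent property (condition 3). First I would apply condition 4 to the pair $(X,Y)$ with the translating codeword $Z = Y$, which gives $\text{d}_S(X,Y) = \text{d}_S(X \boxplus Y, Y \boxplus Y)$. Next, condition 3 yields $Y \boxplus Y = 0$, so the right-hand side collapses to $\text{d}_S(X \boxplus Y, 0)$. Finally I would unwind the definition of the subspace distance at the trivial space: for any subspace $W$, since $\dim(0) = 0$ and $W \cap 0 = 0$, we get $\text{d}_S(W,0) = \dim(W) + 0 - 0 = \dim(W)$. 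Specialising to $W = X \boxplus Y$ gives $\text{d}_S(X,Y) = \dim(X \boxplus Y)$, which is the claim.

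There is essentially no obstacle here — the proposition is an immediate corollary of the definition — but two small points deserve to be stated explicitly in the write-up. First, $X \boxplus Y$ is again a codeword of $\mathcal{C} \subseteq \mathbb{P}_q(n)$ by closure of the group operation, so $\dim(X \boxplus Y)$ is meaningful and the evaluation $\text{d}_S(X \boxplus Y, 0) = \dim(X \boxplus Y)$ is legitimate (this also uses condition 2, that $0 \in \mathcal{C}$, so that the translation by $Z = Y$ lands inside the code). Second, the choice $Z = Y$ is not essential: taking $Z = X$ and using $X \boxplus X = 0$ together with the commutativity of $\boxplus$ gives the same conclusion by an identical computation, which is worth noting if one wants the symmetric form of the argument.
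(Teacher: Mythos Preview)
Your argument is correct and follows exactly the same route as the paper: translate by $Y$, use $Y\boxplus Y=0$, and evaluate $\text{d}_S(X\boxplus Y,0)=\dim(X\boxplus Y)$. The additional remarks about closure and the alternative choice $Z=X$ are fine but not needed for the core proof.
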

\begin{proof}
Given two codewords $X$ and $Y$, using translation invariance and idempotence, we get:
\begin{eqnarray*}
 \text{d}_S(X,Y) &=& \text{d}_S(X \boxplus Y, Y \boxplus Y) \\
          &=& \text{d}_S(X \boxplus Y, 0) \\
					&=& \dim(X \boxplus Y) + \dim(0) - 2\dim((X\boxplus Y)\cap 0)\\
					&=& \dim(X \boxplus Y)
					\end{eqnarray*}
\end{proof}

A linear code can be defined on every projective space $\mathbb{P}_q(n)$ as described in \cite{BEV}. Pick a basis ${\cal B} = \{e_1,e_2,...,e_n\}$ for $\mathbb{F}_q^n$. Define the code ${\cal C}_{\cal B}$ as follows:

$${\cal C}_{\cal B} = \{V | V = \text{span}(S), S \subseteq {\cal B} \}$$

We can verify the simpler group properties. Given a set $S \subseteq {\cal B}$, define $V_S := \text{span}(S)$. Given two elements $V_X$ and $V_Y$ from ${\cal C}_{\cal B}$, define $$V_X \boxplus V_Y :=\text{span}(X \triangle Y).$$ The trivial space $0$ is spanned by the empty set. And since the symmetric difference of a set with itself is empty, the $\boxplus$ operation is an involution. The associativity and the translation invariance can be verified (See \cite{BEV}). The code ${\cal C}_{\cal B}$ has $2^n$ codewords; one for each subset of $\cal B$. A linear code of this form will be called `a code derived from a fixed basis'. Clearly any code constructed like this is generated from the basis of $n$ one dimensional spaces. However, we will see in Theorem \ref{one-dim space} that {\it merely} containing $n$ one dimensional spaces is sufficient to guarantee that the code is of the above type. We record our formal definition for `a code derived from a fixed basis' below. 

\begin{definition}
A linear code ${\cal C}$ in $\mathbb{P}_q(n)$ is {\it derived from a fixed basis} if the group ${\cal C}$ is generated by $n$ one-dimensional spaces in ${\cal C}$.
\end{definition}

As it will be seen in Corollary \ref{fullspacenotlinear}, one cannot define a group operation to make the entire space $\mathbb{P}_q(n)$ a linear code. A natural question, therefore, is whether linear codes with cardinality larger than $2^n$ exist. The authors in \cite{BEV} claim that it does not. A special case of the conjecture asks if this conjecture can be proved for linear codes that contain $\mathbb{F}_q^n$. The actual conjecture just asks this for the case of $q=2$. We record the conjecture formally as follows:

\begin{conjecture}
If a linear code ${\cal C}$ over $\mathbb{P}_q(n)$ contains $\mathbb{F}_q^n$ as a codeword, then $|{\cal C}| \leq 2^n$. Specifically $$|{\cal C} \cap \mathbb{P}_q(n,k)| \leq \binom{n}{k}.$$
\end{conjecture}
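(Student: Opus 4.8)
The plan is to exploit the translation-invariance of $d_S$ together with Proposition \ref{dimensionofsum}, which identifies $\dim(X \boxplus Y)$ with $d_S(X,Y)$. Since $\mathbb{F}_q^n$ is itself a codeword, translating any codeword $X$ by the full space gives a codeword $Y := X \boxplus \mathbb{F}_q^n$ with $\dim(Y) = d_S(X, \mathbb{F}_q^n) = n - \dim(X)$. So the map $X \mapsto X \boxplus \mathbb{F}_q^n$ is an involution on $\mathcal{C}$ that sends $\mathcal{C} \cap \mathbb{P}_q(n,k)$ bijectively onto $\mathcal{C} \cap \mathbb{P}_q(n,n-k)$. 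This symmetry alone does not bound the cardinality, but it suggests that the right object to understand is how a $k$-dimensional codeword and an $(n-k)$-dimensional codeword related by this involution sit inside $\mathbb{F}_q^n$; I expect that $X$ and $X \boxplus \mathbb{F}_q^n$ must actually be complementary subspaces, i.e. $X \oplus (X \boxplus \mathbb{F}_q^n) = \mathbb{F}_q^n$, because $d_S(X, X \boxplus \mathbb{F}_q^n) = \dim(\mathbb{F}_q^n) = n$ forces $\dim(X \cap (X\boxplus\mathbb{F}_q^n)) = 0$ while the dimensions add to $n$.

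The heart of the argument should be an analysis of the one-dimensional codewords and, more generally, of how $\boxplus$ interacts with ordinary subspace operations like $\cap$ and $+$. I would first show that for codewords $X, Y$ one always has constraints linking $\dim(X \boxplus Y)$, $\dim(X \cap Y)$, and $\dim(X + Y)$: since $d_S(X,Y) = \dim X + \dim Y - 2\dim(X\cap Y)$ equals $\dim(X\boxplus Y)$, and since $d_S$ also satisfies the triangle inequality, applying translation invariance to triples of codewords (including $0$ and $\mathbb{F}_q^n$) yields a system of inequalities. In particular, for a $1$-dimensional codeword $L$ and any codeword $X$, either $L \subseteq X$ or $L \cap X = 0$, and in the latter case I would argue $L + X$ "should" again be a codeword or at least that $X \boxplus L$ is forced to be one of $X + L$ or a hyperplane of $X$ — the discrete nature of $d_S$ values leaves very few possibilities. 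Iterating, the subgroup of $\mathcal{C}$ generated by its $1$-dimensional codewords is forced to be a code derived from those lines as a partial basis; Theorem \ref{one-dim space} (forward-referenced in the excerpt) presumably packages exactly this. The counting bound $|\mathcal{C} \cap \mathbb{P}_q(n,k)| \le \binom{n}{k}$ then follows once one shows every codeword is a span of a subset of some fixed $n$-element independent set of lines: the number of $k$-subsets of an $n$-set is $\binom{n}{k}$, and the sum over $k$ is $2^n$.

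The main obstacle, as I see it, is proving that $\mathcal{C}$ contains enough one-dimensional codewords — i.e. that the "fixed basis" structure is present rather than merely that $\mathcal{C}$ has small cardinality. An a priori linear code containing $\mathbb{F}_q^n$ need not obviously contain any line at all, so I would have to manufacture low-dimensional codewords, probably by taking $\boxplus$-sums of pairs of codewords whose subspace distance is small: if $X,Y$ are codewords with $\dim(X \cap Y)$ just one less than $\min(\dim X, \dim Y)$, then $\dim(X \boxplus Y)$ is small, and by induction on dimension one descends to lines. Making this descent terminate correctly, and ruling out the degenerate case where two distinct codewords have the same dimension and large intersection but $X \boxplus Y$ fails to be the "expected" subspace, is where the real work lies; I would handle it by a careful case analysis on the possible values of $\dim(X \cap Y)$ versus $\dim(X \boxplus Y)$, using that both are nonnegative integers constrained by the triangle inequality for $d_S$ applied within $\{0, X, Y, \mathbb{F}_q^n\}$ and its $\boxplus$-translates. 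Once every codeword is known to be a $\boxplus$-sum of lines and those lines are shown to be linearly independent (again via the $d_S$ constraints, since two dependent lines would make the subgroup they generate too small to be a subgroup of the right order), the bound and the characterization both drop out.
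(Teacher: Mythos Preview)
Your opening paragraph is exactly right and matches the paper: the involution $\phi(X)=X\boxplus\mathbb{F}_q^n$ bijects $\mathcal{C}_k$ with $\mathcal{C}_{n-k}$, and $X\oplus\phi(X)=\mathbb{F}_q^n$. But from this point on you head in the wrong direction. You try to deduce the bound by first proving that every codeword is the span of a subset of some fixed $n$-element independent set of lines; i.e.\ you attempt to show that $\mathcal{C}$ is (contained in) a code derived from a fixed basis. That structural statement is the \emph{equality} characterization, not a route to the inequality. A linear code containing $\mathbb{F}_q^n$ need not contain any one-dimensional codeword at all (e.g.\ $\{0,\mathbb{F}_q^n\}$, or $\{0,X,Y,\mathbb{F}_q^n\}$ with $X\oplus Y=\mathbb{F}_q^n$ and $\dim X\ge 2$), so your plan to ``manufacture low-dimensional codewords by descent'' has no guarantee of producing lines, and the case analysis you outline on $\dim(X\cap Y)$ versus $\dim(X\boxplus Y)$ cannot force them to appear.

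What you are missing is one further consequence of the setup you already have, together with an extremal combinatorics lemma. From $C\in\mathcal{C}_k$, $D\in\mathcal{C}_{n-k}$ and $C\cap D=0$ you get $C\oplus D=\mathbb{F}_q^n$, hence (by the ``disjoint sum'' lemma) $C\boxplus D=\mathbb{F}_q^n$, hence $D=\phi(C)$. Thus, listing $\mathcal{C}_k=\{A_1,\dots,A_m\}$ and $\mathcal{C}_{n-k}=\{\phi(A_1),\dots,\phi(A_m)\}$, one has $A_i\cap\phi(A_j)=0$ \emph{if and only if} $i=j$. This is precisely the hypothesis of Lov\'asz's subspace version of the Bollob\'as set-pair inequality, which gives $m\le\binom{k+(n-k)}{k}=\binom{n}{k}$ directly. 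Summing over $k$ yields $|\mathcal{C}|\le 2^n$; only then, in the equality case, does $|\mathcal{C}_1|=n$ force the fixed-basis structure via Theorem~\ref{one-dim space}. The paper's proof is exactly this Lov\'asz argument; your proposal never reaches the cross-intersection condition $A_i\cap B_j\neq 0$ for $i\neq j$, which is the real engine of the bound.
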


\begin{remark}
 It should be noted that the conjecture and our proof works for any field! In other words, the largest abelian group on the subspaces of $\mathbb{F}^n$ that contains the full space (with the translation invariant property of the subspace metric) can have at most $2^n$ elements in the group. In the particular case, when $\mathbb{F}$ is infinite, there are infinitely many subspaces and yet our conditions on the group force its size to be at most $2^n$.
\end{remark}

We prove this conjecture in the next section. We also prove that the equality $|{\cal C}| = 2^n$ holds if and only if ${\cal C}$ is derived from a fixed basis.

\section{Proof of the Conjecture}
\label{Proofs}

In this section, we will formally prove the conjecture through a series of smaller results. The following two lemmas are also proved in \cite{BEV}. And we record it here, because we will use them in the proof of the main theorem.

\begin{lemma}
If $X,Y,Z$ belong to a linear code ${\cal C}$ and $X \boxplus Y = Z$, then $Z \boxplus Y = X$ and $Z \boxplus X = Y$.
\label{cyclic sum}
\end{lemma}
\begin{proof}
This lemma follows from the property $A \boxplus A = 0$ for all $A \in {\cal C}$. $$ Z \boxplus Y = (X \boxplus Y) \boxplus Y = X \boxplus (Y \boxplus Y) = X.$$ One can similarly prove $Z \boxplus X = Y.$
\end{proof}

The following lemma proves that adding two disjoint subspaces in the group gives the direct sum of the two subspaces. 
\begin{lemma}
If $X, Y$ belong to a linear code ${\cal C}$ and $X \cap Y = 0$, then $X \boxplus Y = X \oplus Y.$
\label{disjoint sum}
\end{lemma}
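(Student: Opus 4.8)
The plan is to pin down $X \boxplus Y$ using the one fact we already have that converts the group operation into dimension data: Proposition \ref{dimensionofsum}, which tells us $\dim(X \boxplus Y) = \mathrm{d}_S(X,Y)$. Since $X \cap Y = 0$, we immediately get $\dim(X \boxplus Y) = \dim(X) + \dim(Y) - 2\dim(X\cap Y) = \dim(X) + \dim(Y) = \dim(X \oplus Y)$. So the candidate $X \oplus Y$ and the actual sum $X \boxplus Y$ have the same dimension; it remains to show one contains the other, at which point equality of dimension forces them to coincide.

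The natural way to get containment is to apply Proposition \ref{dimensionofsum} again, this time to the pair $X$ and $Z := X \boxplus Y$. By Lemma \ref{cyclic sum}, $Z \boxplus X = Y$, so Proposition \ref{dimensionofsum} gives $\dim(Y) = \dim(Z \boxplus X) = \mathrm{d}_S(Z, X) = \dim(Z) + \dim(X) - 2\dim(Z \cap X)$. Substituting $\dim(Z) = \dim(X) + \dim(Y)$ and simplifying yields $\dim(Z \cap X) = \dim(X)$, i.e. $X \subseteq Z$. By the symmetric argument using $Z \boxplus Y = X$, we get $Y \subseteq Z$ as well. Hence $X \oplus Y = X + Y \subseteq Z = X \boxplus Y$, and since both sides have dimension $\dim(X) + \dim(Y)$, they are equal.

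The only mildly delicate point is the degenerate case where $X$ or $Y$ is the trivial space $0$; there $X \boxplus Y = Y = X \oplus Y$ holds trivially since $0$ is the group identity, so the bookkeeping above is harmless. I do not expect a real obstacle here: the whole argument is a double application of Proposition \ref{dimensionofsum} together with the cyclic-sum identity, and the crucial input — that disjointness makes the subspace distance additive in dimension — is exactly what makes the dimensions line up. The step most worth stating carefully is the passage from "$X \subseteq Z$ and $Y \subseteq Z$ and $\dim Z = \dim X + \dim Y$" to "$Z = X \oplus Y$", which uses that $X + Y$ already has the full dimension $\dim X + \dim Y$ precisely because $X \cap Y = 0$.
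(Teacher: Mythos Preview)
Your proof is correct and is essentially identical to the paper's own argument: both compute $\dim Z$ from Proposition~\ref{dimensionofsum}, then use Lemma~\ref{cyclic sum} and a second application of Proposition~\ref{dimensionofsum} to force $X\subseteq Z$ and $Y\subseteq Z$, concluding by matching dimensions. The only difference is your explicit mention of the degenerate case $X=0$ or $Y=0$, which the paper omits.
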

\begin{proof}
Suppose $Z = X \boxplus Y$, then from Proposition \ref{dimensionofsum} and $X \cap Y=0$, we have:
$$\dim(Z) = \dim(X) + \dim(Y)$$

From Lemma \ref{cyclic sum}, $Y = Z \boxplus X$ and from Proposition \ref{dimensionofsum},
\begin{eqnarray*}
\dim(Y) &=& \dim(Z) + \dim(X) - 2\dim(Z \cap X).\\
\dim(Y) &=& \dim(Y) + 2 (\dim(X) - \dim(Z\cap X)) .
\end{eqnarray*}

Therefore, $\dim(X) = \dim(Z\cap X)$ and thus $X \subseteq Z$. Similarly we can show that $Y \subseteq Z$. Thus $X+Y \subseteq Z$ which along with $\dim(Z) = \dim(X) + \dim(Y)$ proves that $Z = X \oplus Y$.
\end{proof}

The next lemma shows that given a codeword that is a subset of another codeword, the addition of these two codewords `splits' the larger codeword. In other words, the larger codeword can be generated from its smaller components.

\begin{lemma}
If $X, Y$ belong to a linear code ${\cal C}$ and $X \subset Y$, then there exists a $Z \in {\cal C}$ such that $Y = X \oplus Z.$
\label{splitting}
\end{lemma}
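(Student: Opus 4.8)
The plan is to exhibit the obvious candidate, $Z := X \boxplus Y$, and check that it does the job by chaining the three results already proved above. First I would pin down $\dim(Z)$. Since $X \subset Y$ we have $X \cap Y = X$, so Proposition \ref{dimensionofsum} gives
\[
\dim(Z) = \text{d}_S(X,Y) = \dim(X) + \dim(Y) - 2\dim(X) = \dim(Y) - \dim(X).
\]

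Next I would invoke Lemma \ref{cyclic sum}: from $X \boxplus Y = Z$ we obtain $X \boxplus Z = Y$ (using commutativity of $\boxplus$ together with $Z \boxplus X = Y$). Applying Proposition \ref{dimensionofsum} now to the pair $X, Z$ yields
\[
\dim(Y) = \dim(X \boxplus Z) = \dim(X) + \dim(Z) - 2\dim(X \cap Z).
\]
Substituting $\dim(Z) = \dim(Y) - \dim(X)$ into this identity forces $\dim(X \cap Z) = 0$, i.e. $X \cap Z = 0$.

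With disjointness in hand, Lemma \ref{disjoint sum} gives $X \boxplus Z = X \oplus Z$; but we already know $X \boxplus Z = Y$, so $Y = X \oplus Z$, which is exactly the claim (and the dimensions check out, since $\dim(X) + \dim(Z) = \dim(Y)$).

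I expect essentially no obstacle here: the only real choice is picking $Z = X \boxplus Y$, after which everything is forced. The sole thing to be careful about is the order of application — one needs Lemma \ref{cyclic sum} to reinterpret $Z$ as the $\boxplus$-complement of $X$ inside $Y$ \emph{before} the dimension count can be run, and only then is Lemma \ref{disjoint sum} applicable. I would also remark in passing that the hypothesis $X \subset Y$ can be weakened to $X \subseteq Y$: if $X = Y$ then $Z = 0$ and $Y = X \oplus 0$ trivially.
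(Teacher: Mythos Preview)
Your proof is correct and follows essentially the same route as the paper: take $Z = X \boxplus Y$, compute $\dim(Z) = \dim(Y) - \dim(X)$ from $X \cap Y = X$, use Lemma~\ref{cyclic sum} to rewrite $Y = X \boxplus Z$, run the dimension identity again to force $X \cap Z = 0$, and finish with Lemma~\ref{disjoint sum}. The paper's argument is line-for-line the same.
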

\begin{proof}
Let $Z := X \boxplus Y$. The dimension of $Z$ can be calculated:
\begin{eqnarray*}
\dim(Z) &=& \dim(Y) + \dim(X) - 2\dim(Y \cap X)\\
\dim(Z) &=& \dim(Y) - \dim(X)
\end{eqnarray*}

From Lemma \ref{cyclic sum}, $Y = Z \boxplus X$. The above calculation of $\dim(Z)$ gives,
\begin{eqnarray*}
\dim(Y) &=& \dim(Z) + \dim(X) - 2\dim(Z \cap X)\\
\dim(Y) &=& \dim(Y) - 2\dim(Z\cap X).
\end{eqnarray*}

Therefore $\dim(Z\cap X) = 0$ and thus $Z \cap X = 0$. From Lemma \ref{disjoint sum}, $Y = X \boxplus Z = X \oplus Z.$

\end{proof}

Lemma \ref{splitting} can be used to decompose larger dimensional codewords using the one-dimensional subspaces contained in the larger subspace. In the following lemma, we will first show that the one-dimensional subspaces in a linear code must be linearly independent. Linear dependence of one dimensional spaces will contradict the uniqueness of a inverse. In the second part of the lemma, we will show that the subspaces spanned by one dimensional codewords belong to the code. 

Let $\cal O$ denote the set of all one dimensional spaces in ${\cal C}$. Assume $|{\cal O}| = m$ and let $V_i \in {\cal O}$ represent distinct one dimensional spaces spanned by $v_i \in \mathbb{F}_q^n$ respectively for $1 \leq i \leq m$. When we say the subspaces $\{V_i\}$s are linearly independent, it means that the corresponding $\{v_i\}$s are linearly independent.

\begin{proposition}
Let $\cal O$ denote the set of all one dimensional codewords in ${\cal C}$. Then,
\begin{enumerate}
\item $\cal O$ is linearly independent.
\item $S \subseteq {\cal O} \Rightarrow \text{span}(S) \in {\cal C}.$

\end{enumerate}

\label{subsetgensubspace}
\end{proposition}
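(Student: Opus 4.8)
The plan is to establish the two claims essentially together, exploiting the fact that $\mathbb{F}_q^n$ is a codeword and that every one-dimensional codeword $V_i$ satisfies $V_i \subset \mathbb{F}_q^n$. For part (1), I would argue by contradiction: suppose the vectors $v_1,\dots,v_m$ spanning the one-dimensional codewords are linearly dependent, so after relabeling there is a minimal dependent subset, say $v_1,\dots,v_r$ with $r\ge 2$. The idea is to look at the $\boxplus$-sum $W := V_1\boxplus V_2\boxplus\cdots\boxplus V_r$ and compute its dimension. Since $V_1$ and $V_2$ are distinct one-dimensional spaces they are disjoint, so by Lemma \ref{disjoint sum} $V_1\boxplus V_2 = V_1\oplus V_2$, which has dimension $2$. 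Iterating, as long as the next $V_{j}$ is disjoint from the partial direct sum $V_1\oplus\cdots\oplus V_{j-1}$, Lemma \ref{disjoint sum} keeps giving an honest direct sum; but by minimality of the dependent set, $V_j\cap(V_1\oplus\cdots\oplus V_{j-1})=0$ for $j<r$, while $v_r\in\operatorname{span}(v_1,\dots,v_{r-1})$. Hence $V_1\boxplus\cdots\boxplus V_{r-1} = V_1\oplus\cdots\oplus V_{r-1}$ contains $V_r$, and then Proposition \ref{dimensionofsum} computes $\dim\big((V_1\oplus\cdots\oplus V_{r-1})\boxplus V_r\big) = d_S(V_1\oplus\cdots\oplus V_{r-1}, V_r) = (r-1) + 1 - 2 = r-2$. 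On the other hand, by Lemma \ref{cyclic sum} the element $V_1\boxplus\cdots\boxplus V_{r-2}$ equals $\big((V_1\oplus\cdots\oplus V_{r-1})\boxplus V_r\big)\boxplus V_{r-1}$, whose left-hand side is the direct sum $V_1\oplus\cdots\oplus V_{r-2}$ of dimension $r-2$; a dimension count on the right using Proposition \ref{dimensionofsum} will force a contradiction unless $r-2$ and the other dimensions cannot be reconciled — concretely, one gets two codewords of the same dimension $r-2$ whose $\boxplus$-sum is forced to be both $0$ (wrong, they are distinct) or of positive dimension, contradicting idempotence/uniqueness of inverses. I expect this bookkeeping to be the technical crux.

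For part (2), given a subset $S\subseteq\mathcal{O}$, write $S=\{V_{i_1},\dots,V_{i_s}\}$ and consider $W := V_{i_1}\boxplus\cdots\boxplus V_{i_s}\in\mathcal{C}$. Using part (1), the vectors $v_{i_1},\dots,v_{i_s}$ are linearly independent, so at each stage the next summand is disjoint from the partial sum; repeatedly applying Lemma \ref{disjoint sum} gives $W = V_{i_1}\oplus\cdots\oplus V_{i_s} = \operatorname{span}(S)$. Hence $\operatorname{span}(S)\in\mathcal{C}$. This half is essentially immediate once part (1) is in hand — it is a clean induction on $|S|$ with Lemma \ref{disjoint sum} as the inductive step.

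The main obstacle is part (1): turning "linear dependence of the $v_i$" into a contradiction with the group axioms. The subtlety is that $\boxplus$ is an abstract group operation that only interacts with the lattice of subspaces through the distance identity of Proposition \ref{dimensionofsum} and through Lemmas \ref{cyclic sum}, \ref{disjoint sum}, \ref{splitting}; one cannot simply "add the $v_i$ as vectors." The right lever is that inside a dependent family one can build two distinct codewords (a direct sum of some of the $V_i$'s, and a direct sum of a proper sub-collection) that the group law forces to be equal, or forces their $\boxplus$-sum to have an impossible dimension — the clash being between the combinatorial dimension of direct sums and the "$\dim(X\boxplus Y)=d_S(X,Y)$" rule. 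I would organize the argument to extract the smallest possible dependent relation first, so that all intermediate partial sums are genuine direct sums and only the very last step breaks, keeping the dimension arithmetic as transparent as possible.
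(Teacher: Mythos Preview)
Your argument for part~(2) is exactly the paper's: induction on $|S|$, using part~(1) to guarantee disjointness at each step and Lemma~\ref{disjoint sum} to turn $\boxplus$ into $\oplus$. Nothing to add there.

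For part~(1), your setup matches the paper --- take a minimal dependent subfamily $\{v_1,\dots,v_r\}$ and use Lemma~\ref{disjoint sum} to identify partial $\boxplus$-sums with direct sums --- but the contradiction you sketch does not go through. First, the algebraic identity you write is off:
\[
\big((V_1\oplus\cdots\oplus V_{r-1})\boxplus V_r\big)\boxplus V_{r-1}
= V_1\boxplus\cdots\boxplus V_{r-2}\boxplus V_r,
\]
not $V_1\boxplus\cdots\boxplus V_{r-2}$. Second, and more seriously, ``two distinct codewords of the same dimension whose $\boxplus$-sum is nonzero'' is not a contradiction of anything; that happens all the time in a linear code. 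So the dimension bookkeeping you are proposing does not close.

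The missing idea is a cancellation, not a dimension count. By minimality of the dependence, \emph{both} $\{v_1,\dots,v_{r-1}\}$ and $\{v_2,\dots,v_r\}$ are independent sets spanning the \emph{same} $(r-1)$-dimensional space $\operatorname{span}\{v_1,\dots,v_r\}$. Hence, by your own iterated use of Lemma~\ref{disjoint sum},
\[
V_1\boxplus V_2\boxplus\cdots\boxplus V_{r-1}
= \operatorname{span}\{v_1,\dots,v_r\}
= V_2\boxplus V_3\boxplus\cdots\boxplus V_r
\]
as elements of $\mathcal{C}$. Now cancel $V_2\boxplus\cdots\boxplus V_{r-1}$ on both sides to obtain $V_1=V_r$, contradicting distinctness. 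This is the paper's argument, and it is what your setup was one step away from.

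One small aside: you mention ``exploiting the fact that $\mathbb{F}_q^n$ is a codeword,'' but neither part of this proposition needs that hypothesis, and the paper does not use it here either.
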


\begin{proof}

Proof of $(1)$: The proof is by contradiction. Without loss of generality, let us assume that $D = \{v_1,v_2, ..., v_k\}$ is a minimally dependent set of vectors. Since $V_i$s are distinct one dimensional spaces, we can assume $k > 2$. By minimality of dependence, it follows that $$\text{span}(D) = \text{span}\{v_1,v_2, ..., v_{k-1}\} = \text{span}\{v_2,v_3, ..., v_k\}.$$ From Lemma \ref{disjoint sum}, we see that $$V_1 \boxplus V_2 ....\boxplus V_{k-1} = \text{span}(D) = V_2 \boxplus V_3 ....\boxplus V_{k}.$$ But by cancellation of $V_2 \boxplus V_3 ....\boxplus V_{k-1}$ on both sides, we get $V_1 = V_k$ which contradicts the distinctness of the one dimensional spaces. Therefore $\cal O$ is linearly independent. 

Proof of $(2)$: Let $S \subseteq {\cal O}$. We prove the claim by induction on $|S|$. The case of $|S| = 1$ is easily verified. Partition $S$ as $S = T \cup \{V\}$, then $\text{span}(T) \in {\cal C}$ by induction hypothesis. From part $(1)$ of this proposition, it follows that $V$ is linearly independent of $T$ and therefore $\text{span}(T) \cap V = 0$. Applying Lemma \ref{disjoint sum}, we get $$\text{span}(S) = \text{span}(T) \oplus V = \text{span}(T) \boxplus V \in {\cal C}.$$ By closure property of ${\cal C}$, $\text{span}(S) \in {\cal C}.$

\end{proof}

We shall now show that the group operation for every code derived from a fixed basis is the same. We will specifically show that if $A$ and $B$ are two code words from such a code, then there exists a basis ${\cal O}$, such that $A \boxplus B$ is actually the span of the symmetric difference of some two subsets of $\cal O$. 

\begin{proposition}
Let ${\cal C}$ be a code derived from a fixed basis, and $\cal O$ be the set of one dimensional codewords that generate $\cal C$. Then for any $A, B \in {\cal C}$, there exists $X, Y \subseteq {\cal O}$ such that
$$A \boxplus B = \text{span}(X \triangle Y).$$  
\end{proposition}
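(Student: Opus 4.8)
The plan is to express both $A$ and $B$ in terms of the generating set $\mathcal{O}$ and then show that $\boxplus$ on $\mathcal{C}$ mimics symmetric difference on the index sets. Since $\mathcal{C}$ is derived from a fixed basis, $\mathcal{O} = \{V_1, \dots, V_n\}$ generates $\mathcal{C}$ as a group. By Proposition \ref{subsetgensubspace}, $\mathcal{O}$ is linearly independent, so the $n$ vectors $v_1, \dots, v_n$ spanning these lines form a basis of $\mathbb{F}_q^n$. First I would observe that every codeword $C \in \mathcal{C}$ is a $\boxplus$-sum of a subset of $\mathcal{O}$ — this is just the statement that $\mathcal{O}$ generates the group, combined with the involution property $V_i \boxplus V_i = 0$, which collapses any multiset of generators to a subset. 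So write $A = \boxplus_{i \in X} V_i$ and $B = \boxplus_{j \in Y} V_j$ for subsets $X, Y \subseteq \mathcal{O}$ (abusing notation to let $X, Y$ also denote index sets).

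Next I would compute $A \boxplus B$. Using commutativity, associativity, and cancellation $V_i \boxplus V_i = 0$, the sum $\left(\boxplus_{i \in X} V_i\right) \boxplus \left(\boxplus_{j \in Y} V_j\right)$ collapses exactly to $\boxplus_{k \in X \triangle Y} V_k$. This is the standard computation showing the group generated by involutions indexed by a set is the group of subsets under symmetric difference; it is essentially the same argument already used implicitly when verifying $\mathcal{C}_{\mathcal{B}}$ is a linear code. So it remains to identify $\boxplus_{k \in X \triangle Y} V_k$ with $\mathrm{span}(X \triangle Y)$, where on the right $X \triangle Y$ is read as the corresponding set of one-dimensional spaces (equivalently, the set of spanning vectors $v_k$).

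The key step — and the one place a genuine lemma is invoked rather than pure group manipulation — is that for any subset $S \subseteq \mathcal{O}$, the iterated $\boxplus$-sum of the members of $S$ equals $\mathrm{span}(S)$. This is exactly part (2) of Proposition \ref{subsetgensubspace} together with its proof: by linear independence of $\mathcal{O}$, each successive generator is disjoint from the span of the previous ones, so Lemma \ref{disjoint sum} lets one replace $\boxplus$ by $\oplus$ at each stage, yielding $\boxplus_{k \in S} V_k = \bigoplus_{k \in S} V_k = \mathrm{span}(S)$. Applying this with $S = X \triangle Y$ gives $A \boxplus B = \mathrm{span}(X \triangle Y)$, as required.

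I do not anticipate a serious obstacle here; the proposition is essentially a bookkeeping consequence of the two prior propositions. The only point requiring care is the very first reduction: justifying that $A$ and $B$ can be written as $\boxplus$-sums over \emph{subsets} (not multisets) of $\mathcal{O}$, which relies on property (3) of the definition of a linear code (every element is an involution) to kill repeated summands. Once that is in hand, the rest is the symmetric-difference identity for a group generated by commuting involutions, plus one appeal to Proposition \ref{subsetgensubspace}(2).
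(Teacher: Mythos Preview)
Your proposal is correct and follows essentially the same three-step arc as the paper: represent $A$ and $B$ as $\boxplus$-sums over subsets of $\mathcal{O}$, cancel common generators to obtain the symmetric difference, then identify the resulting $\boxplus$-sum with a span via Lemma~\ref{disjoint sum}. The one noteworthy difference is in the first step: the paper defines $X := A \cap \mathcal{O}$ and $Y := B \cap \mathcal{O}$ explicitly and then argues by contradiction (via Lemma~\ref{splitting}) that $\mathrm{span}(X) = A$, whereas you bypass this by invoking directly that a group generated by commuting involutions has every element as a $\boxplus$-sum over a subset of generators. Your route is slightly more economical since it avoids Lemma~\ref{splitting}; the paper's route has the minor advantage of pinning down $X$ and $Y$ canonically rather than merely asserting their existence.
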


\begin{proof}
Define $$X:=A \cap {\cal O}, \quad Y:=B \cap {\cal O}.$$ Since $X \subseteq A$, we have $\text{span}(X) \subseteq A$. We claim that $\text{span}(X) = A$. If not, $A \boxplus \text{span}(X) = P \neq 0$. From Lemma \ref{splitting}, $P \subset A$ and $A = P \oplus \text{span}(X).$ But ${\cal O}$ generates ${\cal C}$ and therefore there exists a one dimensional space $\{p\} \in {\cal O}$ that belongs to $P$ as well. So $\{p\}$ belongs to $A \cap {\cal O}$ but not to $X$, contradicting $X = A \cap {\cal O}$. Thus we have $\text{span}(X) = A$ and similarly, we have $\text{span}(Y) = B$. Note that from Lemma \ref{disjoint sum}, the group addition of all the linearly independent one dimensional spaces belonging to $X$ is simply the span of $X$. In the following expressions, the $\sum$ is with respect to the group operation. Therefore 
\begin{align*} 
A &= \sum_{E \in X \setminus X \cap Y} E \boxplus \sum_{F \in X \cap Y} F \\
B &= \sum_{D \in Y \setminus X \cap Y} D \boxplus \sum_{F \in X \cap Y} F.
\end{align*}

Using the fact $F \boxplus F = 0$, we get: $$A \boxplus B = \sum_{E \in X \setminus X \cap Y} E \boxplus \sum_{D \in Y \setminus X \cap Y} D .$$ Again using Lemma \ref{disjoint sum} and the fact that $X, Y, Z$ contain one-dimensional subspaces, \begin{eqnarray*}
A \boxplus B &=& \text{span}\left(\left(X \setminus (X \cap Y)\right) \cup \left(Y \setminus (X \cap Y)\right)\right) \\
&=& \text{span}\left(\left(X \cup Y\right) \setminus (X \cap Y)\right) \\ 
&=& \text{span}\left( X \triangle Y \right).
\end{eqnarray*}
\end{proof}

We can choose a basis ${\cal B}$ and consider the linear code as subspaces spanned by the subsets of ${\cal B}$ where the group operation is defined by span of the symmetric difference of the generating sets. The above proposition shows that all `codes derived from a fixed basis' are of this form. We will now prove a simple sufficient condition for the code to be derived from a fixed basis. 

A consequence of Proposition \ref{subsetgensubspace} is that the subgroup generated by the one-dimensional codewords in a linear code is the collection of subspaces spanned by subsets of the one-dimensional spaces in $\cal C$. However, the following theorem proves that the subgroup is the entire group if the subgroup {\it merely contains} $n$ one dimensional spaces. Alternatively, the theorem proves that if a code has $n$ one dimensional spaces, then the code is entirely generated from it.

\begin{theorem}
\label{one-dim space}
If a linear code ${\cal C}$ in $\mathbb{P}_q(n)$ contains $n$ one-dimensional spaces then ${\cal C}$ is derived from a fixed basis.
\end{theorem}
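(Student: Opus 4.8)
The plan is to show that $\mathcal{C}$ coincides with $\mathcal{H}:=\{\,\text{span}(S):S\subseteq\mathcal{O}\,\}$, where $\mathcal{O}$ denotes the set of all one-dimensional codewords of $\mathcal{C}$. By the consequence of Proposition~\ref{subsetgensubspace} recorded just above this theorem, $\mathcal{H}$ is exactly the subgroup of $(\mathcal{C},\boxplus)$ generated by $\mathcal{O}$; hence once $\mathcal{C}=\mathcal{H}$ is established, $\mathcal{C}$ is generated by the one-dimensional spaces in $\mathcal{O}$, and it only remains to check $|\mathcal{O}|=n$. That is immediate: $\mathcal{O}$ is linearly independent by Proposition~\ref{subsetgensubspace}(1), so $|\mathcal{O}|\le n$, while the hypothesis forces $|\mathcal{O}|\ge n$; thus $|\mathcal{O}|=n$ and the underlying vectors form a basis of $\mathbb{F}_q^n$. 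In particular $\mathbb{F}_q^n=\text{span}(\mathcal{O})\in\mathcal{H}$ and $0=\text{span}(\emptyset)\in\mathcal{H}$, and of course $\mathcal{H}\subseteq\mathcal{C}$.

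For the main step I would argue by contradiction, using a counterexample of maximal dimension. Suppose $\mathcal{C}\ne\mathcal{H}$ and pick $W\in\mathcal{C}\setminus\mathcal{H}$ with $\dim W$ as large as possible; this is legitimate even when $\mathcal{C}$ is infinite, since $\dim W$ lies in the finite set $\{1,\dots,n-1\}$ for every $W\in\mathcal{C}\setminus\mathcal{H}$ (recall $0,\mathbb{F}_q^n\in\mathcal{H}$). Because $W\ne\mathbb{F}_q^n$, not every $V_i\in\mathcal{O}$ is contained in $W$ — otherwise $\text{span}(\mathcal{O})=\mathbb{F}_q^n\subseteq W$. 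Fix $V_i\in\mathcal{O}$ with $V_i\not\subseteq W$; as $V_i$ is one-dimensional this forces $V_i\cap W=0$, so Lemma~\ref{disjoint sum} gives that $W':=W\boxplus V_i=W\oplus V_i$ is a codeword with $\dim W'=\dim W+1$. By maximality of $\dim W$ among the elements of $\mathcal{C}\setminus\mathcal{H}$ we must have $W'\in\mathcal{H}$. But then $W=W'\boxplus V_i$ (because $V_i\boxplus V_i=0$) is a $\boxplus$-combination of two members of the subgroup $\mathcal{H}$, so $W\in\mathcal{H}$, contradicting the choice of $W$. Therefore $\mathcal{C}=\mathcal{H}$, i.e.\ $\mathcal{C}$ is derived from a fixed basis.

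I do not expect a substantive obstacle once the right move is spotted: a ``bad'' codeword should be \emph{enlarged} by a one-dimensional codeword it fails to contain, and the extremal induction should run from the top dimension downward. The more obvious-looking routes — trying to prove directly that every nonzero codeword contains a one-dimensional codeword, or attempting to control codewords that contain lines which are not themselves codewords — are noticeably more delicate; the argument above avoids them entirely, invoking only Lemma~\ref{disjoint sum} together with the fact that $\mathcal{H}$ is a subgroup of $(\mathcal{C},\boxplus)$.
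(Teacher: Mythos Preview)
Your argument is correct, and it is genuinely different from the paper's. The paper also argues by contradiction but runs the extremal choice the other way: it picks a \emph{minimal} $M\in\mathcal{C}\setminus S$, then a \emph{maximal} $K\in S$ with $K\cap M=0$, and spends most of the work proving (via an explicit coefficient computation) that $K+M=\mathbb{F}_q^n$; cancellation against $K\boxplus K_{\mathrm{comp}}=\mathbb{F}_q^n$ then forces $M=K_{\mathrm{comp}}\in S$. Your route picks a \emph{maximal} $W\in\mathcal{C}\setminus\mathcal{H}$, enlarges it by a single $V_i\not\subseteq W$ using Lemma~\ref{disjoint sum}, and then observes $W=W'\boxplus V_i$ lies in the subgroup $\mathcal{H}$. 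This is strictly shorter: it replaces the coordinate-level verification that $K+M=\mathbb{F}_q^n$ by the one-line remark that $\mathcal{H}$ is $\boxplus$-closed, and it never needs Lemma~\ref{splitting}. What the paper's approach buys is that it makes visible the complement $K_{\mathrm{comp}}$ that $M$ must equal, which ties in nicely with the later ``pairing'' picture in Lemma~\ref{conditions}; your approach trades that structural byproduct for a cleaner proof.
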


\begin{proof}
If $\cal C$ contains $n$ one-dimensional spaces from $\mathbb{P}_q(n)$, then it follows, from the first part of Proposition \ref{subsetgensubspace}, that the one dimensional spaces form a basis ${\cal B}$ for $\mathbb{F}_q^n$. We have to prove that these $n$ one dimensional spaces generate the code. Let $S$ be the set of all subspaces in $\cal C$ that are spanned by a subset of ${\cal B}$. 

We want to prove that $S = {\cal C}$. 

The proof is by contradiction. Suppose ${\cal C} \setminus S$ is non-empty. Pick a minimal subspace $M$ in ${\cal C} \setminus S$. Consider the following set:
$${\cal M} = \{K \in S | K \cap M = 0\}$$
Due to minimality and Lemma \ref{splitting}, $M$ does not contain any one-dimensional space in $S$ and therefore ${\cal M}$ is non-empty. Let $K$ be a maximal subspace of ${\cal M}$. We claim that $$K + M = \mathbb{F}_q^n.$$ We prove this claim by contradiction. Suppose $K + M \subsetneq \mathbb{F}_q^n$, then there exists an element $v \in {\cal B}$ that does not belong to $K + M$. This means that $K \subsetneq \text{span}\{K,v\}$ and since $K$ is a maximal subspace that belongs to $\cal M$, $\text{span}\{K,v\} \notin \cal M$. But $\text{span}\{K,v\} \in S$ since $K \in S$ and $v \in {\cal B}$. Therefore we conclude $\text{span}\{K,v\} \cap M \neq 0$. Since $K \in S$, it must be spanned by a subset ${\cal K} \subseteq {\cal B}$. Let $0 \neq m \in \text{span}\{K,v\} \cap M$, then $$m = \sum_{k \in {\cal K}}c_k k + c_v v.$$ We note that $c_v$ is non zero since $K \cap M  = 0$. Rearranging the equation to express $v$, we get $$v = \sum_{k \in {\cal K}}-\frac{c_k}{c_v} k + \frac1{c_v} m.$$ But this contradicts $v \notin K + M.$ Therefore $K+M = \mathbb{F}_q^n.$

By Lemma \ref{disjoint sum}, $K \boxplus M = \mathbb{F}_q^n$. However $K_{\text{comp}} := \text{span}({\cal B} \setminus {\cal K})$ also satisfies $K \boxplus K_{\text{comp}} = \mathbb{F}_q^n$. Cancellation laws hold for $\boxplus$, and thus we have $M = K_{\text{comp}}$ which contradicts the fact that $M \notin S$ (since $K_{\text{comp}} \in S$). Hence ${\cal C} \subseteq S$ i.e. only subspaces that can be spanned by a subset of the chosen basis ${\cal B}$ can belong to the code. Since $S$ contains spans of subsets of ${\cal B}$, the second part of Proposition \ref{subsetgensubspace} ensures the reverse inclusion $S \subseteq {\cal C}$. Thus we have established $S = {\cal C}$.
\end{proof}

Theorem \ref{one-dim space} will be used to show that the equality holds in the conjecture only when the code is derived from a fixed basis. In other words, the equality of the conjecture holds only for codes that are designed by fixing a basis, then spanning all subsets of the fixed basis (where the codewords are the span of the subsets). Therefore, the symmetric diffence of generating sets is the group operation that achieves the maximum size of a linear code, given that the $\mathbb{F}_q^n$ belongs to the linear code. 

Theorem \ref{one-dim space} also shows that the entire projective space $\mathbb{P}_q(n)$ cannot be a linear code. The following corollary can also be deduced from older results \cite{BEV}.

\begin{corollary}
There does not exist a group operation $\boxplus$ such that $\mathbb{P}_q(n)$ is a linear code.
\label{fullspacenotlinear}
\end{corollary}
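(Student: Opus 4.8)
The plan is to get this immediately from Theorem~\ref{one-dim space} together with a cardinality count. I would take $n \geq 2$ throughout (for $n \leq 1$ the space $\mathbb{P}_q(n)$ coincides with a code derived from a fixed basis, so the statement is understood for $n \geq 2$). Assume, for contradiction, that some binary operation $\boxplus$ makes ${\cal C} = \mathbb{P}_q(n)$ a linear code. The point I would exploit is that $\mathbb{P}_q(n)$ contains \emph{every} one-dimensional subspace of $\mathbb{F}_q^n$, and the number of these is $1 + q + \cdots + q^{n-1} = (q^n-1)/(q-1)$, a sum of $n$ terms each at least $1$, hence at least $n$. So $\mathbb{P}_q(n)$ contains at least $n$ one-dimensional codewords.

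Next I would apply Theorem~\ref{one-dim space} to conclude that $\mathbb{P}_q(n)$ is derived from a fixed basis ${\cal B} = \{e_1, \dots, e_n\}$; equivalently, by Proposition~\ref{subsetgensubspace} and the discussion after it, every codeword equals $\text{span}(S)$ for some $S \subseteq {\cal B}$, and there are exactly $2^n$ such subspaces. This contradicts the actual size of $\mathbb{P}_q(n)$: for $n \geq 2$ the one-dimensional subspace $\text{span}(e_1 + e_2)$ belongs to $\mathbb{P}_q(n)$ but is not the span of any subset of ${\cal B}$, so $|\mathbb{P}_q(n)| \geq 2^n + 1 > 2^n$. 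This contradiction finishes the proof.

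I do not expect a genuinely hard step; this is a short corollary of Theorem~\ref{one-dim space}. The only things needing care are bookkeeping: verifying that $\mathbb{P}_q(n)$ actually satisfies the hypothesis of Theorem~\ref{one-dim space}, i.e.\ that it contains at least $n$ one-dimensional spaces (the elementary count above), and being explicit that the degenerate cases $n = 0, 1$ are genuine linear codes, so the statement must be read for $n \geq 2$. As an alternative route that avoids Theorem~\ref{one-dim space}, I could instead use only part~(1) of Proposition~\ref{subsetgensubspace}: were $\mathbb{P}_q(n)$ a linear code, its set ${\cal O}$ of one-dimensional codewords would consist of all $(q^n-1)/(q-1)$ one-dimensional subspaces of $\mathbb{F}_q^n$, and for $n \geq 2$ that is strictly more than $n$ vectors in $\mathbb{F}_q^n$, which cannot be linearly independent --- contradicting Proposition~\ref{subsetgensubspace}(1).
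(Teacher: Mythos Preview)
Your argument is correct and essentially matches the paper's: the paper also invokes Theorem~\ref{one-dim space} and then derives the contradiction from Proposition~\ref{subsetgensubspace}(1) (a linear code cannot contain more than $n$ one-dimensional spaces), which is precisely your alternative route; your main route differs only in extracting the contradiction via cardinality instead. Your explicit handling of the degenerate cases $n \leq 1$ is a detail the paper omits.
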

\begin{proof}
From Theorem \ref{one-dim space}, it follows that $\mathbb{P}_q(n)$ is a linear code derived from a fixed basis. However, the first part of Proposition \ref{subsetgensubspace} does not allow a linear code to have more than $n$ one dimensional spaces. Therefore $\mathbb{P}_q(n)$ cannot be a linear code.
\end{proof}

Now we will prove a few lemmas that will be useful to prove the main theorem. At the heart of the proof, we need a technical lemma due to Lovasz \cite{Lov} given below.

\begin{lemma}
\cite[Theorem 4.9]{Lov}
If  $A_1, A_2, ..., A_m $ are $r$-dimensional spaces and $B_1, B_2, ..., B_m$ are $s$-dimensional spaces with the property $A_i \cap B_j = 0 \Longleftrightarrow i=j$, then $$m \leq \binom{r+s}{s}.$$
\label{Lovasz}
\end{lemma}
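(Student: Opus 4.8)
The plan is to establish the bound by the exterior-algebra method, the standard route to Bollob\'as-type set-pair (and subspace-pair) inequalities. First I would set up the dictionary between intersections and wedge products. Fix the ambient space $V$; we may assume $\dim V \ge r+s$, since otherwise $A_i \cap B_i \ne 0$ for every $i$ and the hypothesis is vacuously impossible. For an $r$-dimensional subspace $A$ with basis $u_1,\dots,u_r$ put $\widehat{A} := u_1 \wedge \cdots \wedge u_r \in \bigwedge^{r} V$, a nonzero decomposable vector defined up to a scalar, and similarly $\widehat{B} \in \bigwedge^{s} V$ for an $s$-dimensional $B$. The elementary fact is that $\widehat{A} \wedge \widehat{B} \ne 0$ in $\bigwedge^{r+s} V$ if and only if $A \cap B = 0$: a basis of $A$ followed by a basis of $B$ is linearly independent exactly when $A \cap B = 0$, and if $A \cap B \ne 0$ a common nonzero vector occurs twice in the product and annihilates it. Hence, under the hypothesis of the lemma, $\widehat{A_i}\wedge\widehat{B_j}\ne 0 \iff i=j$.

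Second, I would reduce to the case $\dim V = r+s$ by a generic linear projection. After extending the base field if necessary (which leaves all the intersection conditions, and hence the value of $m$ and the validity of the hypothesis, unchanged), choose a linear map $\phi\colon V \to \mathbb{F}^{r+s}$ that is injective on each of the finitely many subspaces $A_i+B_i$ (each of dimension exactly $r+s$, since $A_i\cap B_i=0$) and on each subspace $A_i\cap B_j$ with $i \ne j$. Each such injectivity requirement is a nonempty Zariski-open condition on $\phi$, so finitely many of them can be met simultaneously over a large enough field. For this $\phi$, $\phi(A_i)$ is $r$-dimensional, $\phi(B_j)$ is $s$-dimensional, $\phi(A_i)\cap\phi(B_i)=\phi(A_i\cap B_i)=0$, and $\phi(A_i)\cap\phi(B_j)\supseteq \phi(A_i\cap B_j)\ne 0$ for $i\ne j$. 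Replacing $A_i$ by $\phi(A_i)$ and $B_j$ by $\phi(B_j)$, we have the same configuration inside $\mathbb{F}^{r+s}$, now with $A_i \oplus B_i = \mathbb{F}^{r+s}$.

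Third, the endgame is a one-line independence argument. In $\bigwedge^{r+s}\mathbb{F}^{r+s}\cong\mathbb{F}$ we have $\widehat{A_i}\wedge\widehat{B_i}\ne 0$ while $\widehat{A_i}\wedge\widehat{B_j}=0$ for $i\ne j$. If $\sum_{i}\lambda_i\widehat{A_i}=0$ is a linear relation in $\bigwedge^{r}\mathbb{F}^{r+s}$, then wedging with $\widehat{B_j}$ kills every term but the $j$-th and yields $\lambda_j(\widehat{A_j}\wedge\widehat{B_j})=0$, forcing $\lambda_j=0$. Thus $\widehat{A_1},\dots,\widehat{A_m}$ are linearly independent in $\bigwedge^{r}\mathbb{F}^{r+s}$, whence $m \le \dim\bigwedge^{r}\mathbb{F}^{r+s} = \binom{r+s}{r}=\binom{r+s}{s}$.

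I expect the only genuine obstacle to be making the second step airtight: checking that a single projection $\phi$ can satisfy all the openness conditions at once, and handling a small ground field (resolved by base change, since linear independence and triviality of an intersection are preserved under field extension). The wedge-product dictionary of the first step and the independence argument of the third step are routine.
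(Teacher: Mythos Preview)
The paper does not prove this lemma at all; it is quoted without proof as \cite[Theorem~4.9]{Lov} and used as a black box in the main theorem. So there is no ``paper's own proof'' to compare against.

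Your exterior-algebra argument is correct and is indeed the standard modern proof of the Bollob\'as--Lov\'asz subspace-pair inequality. The three steps are sound: the wedge-product criterion $\widehat{A}\wedge\widehat{B}\neq 0 \iff A\cap B=0$ is elementary; the generic projection to $\mathbb{F}^{r+s}$ works because injectivity of $\phi$ on each $(r+s)$-dimensional $A_i+B_i$ and on each nonzero $A_i\cap B_j$ are nonempty Zariski-open conditions on $\phi\in\mathrm{Hom}(V,\mathbb{F}^{r+s})$, and a finite intersection of such conditions is nonempty over an infinite field (your base-change remark handles the finite-field case, since both the hypotheses and the number $m$ are unchanged by $-\otimes_{\mathbb{F}}\overline{\mathbb{F}}$); and the final independence step is exactly the diagonal trick that yields $m\le\dim\bigwedge^{r}\mathbb{F}^{r+s}=\binom{r+s}{r}$. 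One small point worth making explicit in the second step: injectivity of $\phi$ on $A_i+B_i$ is what guarantees $\phi(A_i)\cap\phi(B_i)=\phi(A_i\cap B_i)$ rather than merely $\supseteq$, since $\phi(a)=\phi(b)$ forces $a-b\in\ker\phi\cap(A_i+B_i)=0$.
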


Consider ${\cal C}_k := {\cal C} \cap \mathbb{P}_q(n,k)$ for all $1 \leq k \leq n$. The next lemma show that ${\cal C}_k$ and ${\cal C}_{n-k}$ satisfy the conditions of Lemma \ref{Lovasz}, namely the two classes of subspaces have the same number of elements and under some ordering they satisfy the intersection conditions. In the first part of the following lemma, we shall establish that $|{\cal C}_k| = |{\cal C}_{n-k}|$. We shall also show, in the second part, that ${\cal C}_k$ and ${\cal C}_{n-k}$ satisfy the intersection conditions specified in Lemma \ref{Lovasz}. 

\begin{lemma}

Define a map $\phi: {\cal C} \longrightarrow {\cal C}$ as $C \longmapsto \mathbb{F}_q^n \boxplus C$.  We claim the following:

\begin{enumerate}
\item $|{\cal C}_k| = |{\cal C}_{n-k}|$.
\item If $C \in {\cal C}_k$ and $D \in {\cal C}_{n-k}$ then $C \cap D = 0$ if and only if $\phi(C) = D$.
\end{enumerate}
\label{conditions}
\end{lemma}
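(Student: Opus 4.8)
The plan is to exploit the map $\phi(C) = \mathbb{F}_q^n \boxplus C$ as the organizing object for both claims. First I would observe that $\phi$ is an involution: since $\phi(\phi(C)) = \mathbb{F}_q^n \boxplus (\mathbb{F}_q^n \boxplus C) = (\mathbb{F}_q^n \boxplus \mathbb{F}_q^n) \boxplus C = 0 \boxplus C = C$, using associativity and idempotence. Hence $\phi$ is a bijection on ${\cal C}$. Next, for $C \in {\cal C}_k$, I would compute $\dim(\phi(C)) = \dim(\mathbb{F}_q^n \boxplus C) = \mathrm{d}_S(\mathbb{F}_q^n, C)$ by Proposition \ref{dimensionofsum}. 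Since $C \subseteq \mathbb{F}_q^n$, we have $\mathrm{d}_S(\mathbb{F}_q^n, C) = n + k - 2k = n - k$. Therefore $\phi$ maps ${\cal C}_k$ into ${\cal C}_{n-k}$, and being a bijection with involutive inverse also restricting appropriately, it restricts to a bijection ${\cal C}_k \to {\cal C}_{n-k}$. This gives part (1), $|{\cal C}_k| = |{\cal C}_{n-k}|$.

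For part (2), fix $C \in {\cal C}_k$ and $D \in {\cal C}_{n-k}$. One direction: suppose $\phi(C) = D$, i.e. $\mathbb{F}_q^n \boxplus C = D$. By Lemma \ref{cyclic sum} this gives $C \boxplus D = \mathbb{F}_q^n$, hence by Proposition \ref{dimensionofsum}, $\mathrm{d}_S(C,D) = \dim(\mathbb{F}_q^n) = n$. Writing out $\mathrm{d}_S(C,D) = \dim C + \dim D - 2\dim(C \cap D) = k + (n-k) - 2\dim(C\cap D) = n - 2\dim(C \cap D)$, we conclude $\dim(C \cap D) = 0$, i.e. $C \cap D = 0$. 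Conversely, suppose $C \cap D = 0$. Since $\dim C + \dim D = n$, we get $C \oplus D = \mathbb{F}_q^n$, and by Lemma \ref{disjoint sum}, $C \boxplus D = C \oplus D = \mathbb{F}_q^n$. Then by Lemma \ref{cyclic sum}, $\mathbb{F}_q^n \boxplus C = D$, that is, $\phi(C) = D$. This completes both directions.

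I do not anticipate a serious obstacle here — every step reduces to the translation-invariance machinery already established (Proposition \ref{dimensionofsum}, Lemma \ref{cyclic sum}, Lemma \ref{disjoint sum}). The one point requiring a little care is confirming that $\phi$ genuinely restricts to a \emph{bijection} between ${\cal C}_k$ and ${\cal C}_{n-k}$ rather than merely an injection: this follows because $\phi$ is its own inverse, so applying the dimension computation symmetrically shows $\phi({\cal C}_{n-k}) \subseteq {\cal C}_k$, and the two inclusions together with $\phi \circ \phi = \mathrm{id}$ force equality $\phi({\cal C}_k) = {\cal C}_{n-k}$. I would state the involution property explicitly at the start of the proof since it is used twice.
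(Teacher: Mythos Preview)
Your proposal is correct and follows essentially the same approach as the paper: the involution computation and dimension count for part (1) are identical, and part (2) matches up to a cosmetic difference in the forward direction, where the paper invokes Lemma~\ref{splitting} to obtain $C \oplus D = \mathbb{F}_q^n$ directly while you instead compute $\dim(C\cap D)=0$ from $\mathrm{d}_S(C,D)=n$. Both routes are equally short and rest on the same underlying machinery.
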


\begin{proof}
The map is left multiplication by $\mathbb{F}_q^n$ and since ${\cal C}$ is closed under $\boxplus$, $\phi$ is well-defined.

Proof of $(1)$: We will show that $\phi(\phi(C)) = C$. 
\begin{eqnarray*}
 \phi(\phi(C)) &=& {\mathbb{F}_q^n} \boxplus ({\mathbb{F}_q^n} \boxplus C)       \nonumber \\
   &=& ({\mathbb{F}_q^n} \boxplus {\mathbb{F}_q^n}) \boxplus C \nonumber \\
   &=& 0 \boxplus C = C.
\end{eqnarray*}

Thus $\phi$ is an involution and therefore bijective. To show $|{\cal C}_k| = |{\cal C}_{n-k}|$ we prove that $\phi({\cal C}_k) = {\cal C}_{n-k}$. In other words, we will show that a $C \in {\cal C}_k$ gets mapped to a subspace of dimension $n-k$ by the following calculation:
 \begin{eqnarray*}
 \dim ({\cal C} \boxplus \mathbb{F}_q^n) &=& \dim ({\cal C}) + \dim({\mathbb{F}_q^n}) - 2 \dim({\cal C} \cap \mathbb{F}_q^n)       \nonumber \\
   &=& k + n - 2k \nonumber \\
   &=& n-k
\end{eqnarray*}

Therefore $\phi(C) \in {\cal C}_{n-k}$ and similarly $\phi({\cal C}_{n-k}) \subseteq {\cal C}_{k}$. Thus $\phi(\phi({\cal C}_{n-k})) \subseteq \phi({\cal C}_{k}) \subseteq {\cal C}_{n-k}$. Since $\phi$ is an involution, we know that $\phi(\phi({\cal C}_{n-k})) = {\cal C}_{n-k}$ and therefore it follows that $\phi({\cal C}_k) =  {\cal C}_{n-k}$.

Proof of $(2)$: Suppose $D = \phi(C) = \mathbb{F}_q^n \boxplus C$. Applying Lemma \ref{splitting}, we get $C \oplus D = \mathbb{F}_q^n$ which implies $C \cap D = 0$. For the converse, we note that $C \cap D = 0$ and $\dim(C) + \dim(D) = n$ gives us $C \oplus D = \mathbb{F}_q^n$ and therefore $D = \mathbb{F}_q^n \boxplus C = \phi(C)$.
\end{proof}

We finally have the necessary lemmas to prove the conjecture. The following theorem is the main theorem of this paper which proves the conjecture and characterizes the equality case.

\begin{theorem}
If ${\cal C}$ is a linear code over $\mathbb{P}_q(n)$ that contains $\mathbb{F}_q^n$, then $|{\cal C}| \leq 2^n$. Further the equality holds if and only if ${\cal C}$ is derived from a fixed basis.
\end{theorem}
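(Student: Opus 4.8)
The plan is to bound each $|\mathcal{C}_k|$ separately using the Lovász lemma and then sum over $k$. Fix $k$ with $1 \le k \le n-1$ and enumerate $\mathcal{C}_k = \{A_1,\dots,A_m\}$, where $m = |\mathcal{C}_k|$. Using the involution $\phi$ of Lemma \ref{conditions}, set $B_i := \phi(A_i)$; by part (1) of that lemma $\phi$ maps $\mathcal{C}_k$ bijectively onto $\mathcal{C}_{n-k}$, so $B_1,\dots,B_m$ are $m$ distinct $(n-k)$-dimensional subspaces. Part (2) of Lemma \ref{conditions} then gives, for all $i,j$, that $A_i \cap B_j = 0 \iff \phi(A_i) = B_j \iff B_i = B_j \iff i = j$. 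Hence the $r$-dimensional spaces $A_1,\dots,A_m$ (with $r = k$) and the $s$-dimensional spaces $B_1,\dots,B_m$ (with $s = n-k$) satisfy the hypothesis of Lemma \ref{Lovasz}, so $m \le \binom{r+s}{s} = \binom{n}{k}$. For $k=0$ and $k=n$ the bound $|\mathcal{C}_k| = 1 = \binom{n}{k}$ is immediate, since $0$ is the identity of $\mathcal{C}$ and $\mathbb{F}_q^n$ is the only $n$-dimensional subspace of $\mathbb{F}_q^n$. Summing, $|\mathcal{C}| = \sum_{k=0}^{n} |\mathcal{C}_k| \le \sum_{k=0}^{n}\binom{n}{k} = 2^n$, which proves the Conjecture, refined statement included.

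For the equality characterization, suppose first that $|\mathcal{C}| = 2^n$. Since $|\mathcal{C}_k| \le \binom{n}{k}$ for every $k$ and these upper bounds sum to $2^n$, equality must hold for each $k$; in particular $|\mathcal{C}_1| = n$, so $\mathcal{C}$ contains $n$ one-dimensional spaces, and Theorem \ref{one-dim space} forces $\mathcal{C}$ to be derived from a fixed basis. Conversely, suppose $\mathcal{C}$ is derived from a fixed basis, i.e.\ generated by $n$ one-dimensional codewords; by Proposition \ref{subsetgensubspace}(1) these are linearly independent, hence form a basis $\mathcal{B}$ of $\mathbb{F}_q^n$, and by Proposition \ref{subsetgensubspace}(2) the $2^n$ distinct subspaces $\text{span}(T)$, $T \subseteq \mathcal{B}$, are all codewords of $\mathcal{C}$; in particular $\mathbb{F}_q^n = \text{span}(\mathcal{B}) \in \mathcal{C}$, so the upper bound just established applies and gives $|\mathcal{C}| \le 2^n$. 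Together with $|\mathcal{C}| \ge 2^n$ this yields $|\mathcal{C}| = 2^n$.

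Since essentially all the structural content is already packaged in Lemmas \ref{Lovasz} and \ref{conditions}, the only delicate point is the bookkeeping in the application of the Lovász lemma: one must index $\mathcal{C}_{n-k}$ as the $\phi$-image of the chosen indexing of $\mathcal{C}_k$ so that the biconditional $A_i\cap B_j=0\iff i=j$ holds literally, and one should observe that $\phi$ is fixed-point free (because $\phi(C)=C$ would force $\mathbb{F}_q^n=0$), so the argument remains valid in the self-paired case $k=n-k$ with $n$ even. I do not expect any genuine obstacle beyond this; the conceptual work of verifying that $\mathcal{C}_k$ and $\mathcal{C}_{n-k}$ meet the cross-intersection hypothesis has already been done in Lemma \ref{conditions}, and the equality case reduces cleanly to Theorem \ref{one-dim space} and Proposition \ref{subsetgensubspace}.
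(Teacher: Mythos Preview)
Your proof is correct and follows essentially the same route as the paper: use Lemma~\ref{conditions} to verify that $(\mathcal{C}_k,\mathcal{C}_{n-k})$ satisfies the cross-intersection hypothesis of Lov\'asz's lemma (Lemma~\ref{Lovasz}), deduce $|\mathcal{C}_k|\le\binom{n}{k}$, sum over $k$, and read off the equality case via $|\mathcal{C}_1|=n$ and Theorem~\ref{one-dim space}. You add a little extra care---the explicit converse for the equality characterization and the remark on the self-paired case $k=n/2$---but the argument is otherwise the paper's own.
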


\begin{proof}
The first part of Lemma \ref{conditions} shows that $|{\cal C}_k| = |{\cal C}_{n-k}|$ and along with second part of Lemma \ref{conditions}, ${\cal C}_k$ and ${\cal C}_{n-k}$ are collections of subspaces that satisfy the conditions of Lemma \ref{Lovasz}. Applying that lemma to ${\cal C}_k$ and ${\cal C}_{n-k}$ we get, 
$$|{\cal C}_{n-k}| = |{\cal C}_{k}| \leq \binom{k+(n-k)}{k} = \binom{n}{k}.$$

The theorem is finally proved by the following calculation:
$$|{\cal C}| = \sum_{k=0}^{n} |{\cal C}_{k}| \leq \sum_{k=0}^{n} \binom{n}{k} = 2^n.$$

We note that the equality holds only if $|{\cal C}_{k}| =  \binom{n}{k}$ for all $0 \leq k \leq n$. In particular, this means $|{\cal C}_{1}|=n$ and thus the code contains $n$ one dimensional subspaces. Applying Proposition \ref {one-dim space}, we infer that $\cal C$ is derived from a fixed basis.
\end{proof}

\begin{remark}
A natural question is whether every maximal linear code contains the full space. This would prove the conjecture that the size of a maximal linear code in $\mathbb{P}_q(n)$ is at most $2^n$. However, this strategy will fail because there are examples of maximal codes in the literature that do not contain the full space \cite{BEV}. %This also means that the condition $\mathbb{F}_q^n \in {\cal C}$ is necessary for the equality case of the main theorem.
\end{remark}

\section{Conclusion}
\label{Conclusion}

We have proved the conjecture on the bound of a linear code, in a projective space, that contains the full space as a codeword. The maximal linear codes containing the full space are characterized as codes derived from a fixed basis. However, all known examples of linear codes that does not contain the full space also have at most $2^n$ codewords. This general conjecture would be an interesting problem to consider for future research. 

Further it can be shown that these proofs can be adapted in a more general framework of lattices. We have not explored the various lattice connections in this paper. However Hamming spaces are examples of distributive lattices and projective spaces are examples of modular lattices. The codes that are derived from a fixed basis are embeddings of distributive lattices into the modular lattice of a projective space. We conjecture that the size of the largest distributive sub-lattice of a geometric modular lattice of height $n$ must be $2^n$. 

\section*{Acknowledgment}

The first author thanks Prashant N. for stimulating discussions about projective spaces and Gautam Shenoy for verifying the proofs of the assertions.

%\bibliography{mybib}
%\bibliographystyle{ieeetr}

\end{document}